\def \X {\ensuremath{\mathcal{X}}\xspace}
\def \U {\ensuremath{\mathcal{U}}\xspace}
\def \D {\ensuremath{\mathcal{D}}\xspace}
\def \G {\ensuremath{\mathcal{G}}\xspace}
\def \I {\ensuremath{\mathcal{I}}\xspace}
\def \L {\ensuremath{\mathcal{L}}\xspace}
\def \M {\ensuremath{\mathcal{M}}\xspace}
\def \C {\ensuremath{\mathcal{C}}\xspace}
\def \R {\ensuremath{\mathcal{R}}\xspace}
\definecolor{Gray}{gray}{0.95}
\definecolor{LightCyan}{rgb}{0.88,1,1}
\newcommand{\Reals}[0]{\ensuremath{\mathbb{R}}}
\begin{document}

\title{Parallel Reachability Analysis for Hybrid Systems}
\author{Undisclosed Authors}
\institute{Undisclosed Institutes}
\author{ Amit Gurung\inst{1} \and Arup Kumar Deka\inst{1} \and \newline
Ezio Bartocci\inst{2} \and Sergiy Bogomolov\inst{3}, Radu Grosu\inst{2} \and Rajarshi Ray\inst{1}\thanks{Corresponding author e-mail address: \email{rajarshi.ray@nitm.ac.in}}}

\institute{National Institute of Technology Meghalaya, India
\and Vienna University of Technology, Austria
\and Institute of Science and Technology Austria, Austria
}
\footnotetext{\noindent \textbf{Acknowledgements.} 
\small{This work was supported in part by DST-SERB, GoI under Project No. YSS/2014/000623 and by the European Research Council (ERC)
under grant 267989 (QUAREM) and by the Austrian Science Fund (FWF)
under grants S11402-N23, S11405-N23 and S11412-N23 (RiSE/SHiNE)
and Z211-N23 (Wittgenstein Award).} }
\maketitle

\begin{abstract}
 We propose two parallel state-space exploration algorithms for hybrid systems with the goal of enhancing performance on multi-core shared memory systems. The first is an adaption of the parallel breadth first search in the SPIN model checker. We show that the adapted algorithm does not provide the desired load balancing for many hybrid systems benchmarks. The second is a task parallel algorithm based on cheaply precomputing cost of post (continuous and discrete) operations for effective load balancing. We illustrate the task parallel algorithm and the cost precomputation of post operators on a support-function-based algorithm for state-space exploration. The performance comparison of the two algorithms displays a better CPU utilization/load-balancing of the second over the first, except for certain cases. The algorithms are implemented in the model checker XSpeed and our experiments show a maximum speed-up of $900\times$ on a navigation benchmark with respect to SpaceEx LGG scenario, comparing on the basis of equal number of post operations evaluated.
 
\end{abstract}



\section{Introduction}

Hybrid systems are a popular formal framework to model and 
verify safety properties in biological~\cite{Bartocci2009,Bartocci2016} and cyber-physical 
systems~\cite{Lee2015}. This formalism combines the classical discrete
state-based representation of finite automata with a continuous dynamics 
semantics in each state (or mode). 
A  hybrid  system  is  called  \emph{safe}  if  a
given set of \emph{bad states} are not reachable from a set of initial states.  
Hence, safety can be proved by performing a reachability analysis
that is in general undecidable~\cite{Henzinger95} for hybrid systems.
SpaceEx~\cite{FLGDCRLRGDM11,Bogomolov2015a,DBLP:conf/hvc/BogomolovFGGPPS14,Bogomolov2015b} is one of the most popular reachability-analysis tools for hybrid systems using semi-decision  procedures  based  on  over-approximation
techniques~\cite{GirardLG08,DBLP:conf/hybrid/Girard05}. The reachable states are represented as a collection of 
continuous sets, each one symbolically  represented.
The main two challenges to address with such set-based
methods are precision and scalability. Recently, algorithms using convex sets represented 
using support functions~\cite{GirardLG08,DBLP:conf/cav/GuernicG09} and zonotopes~\cite{DBLP:conf/hybrid/Girard05} have shown
a very good scalability.  However, all these techniques 
 were originally conceived to run sequentially.  
 Hence, they are currently not suitable to exploit the modern and powerful 
 multi-core architectures that would enable them to improve further their scalability.  

\paragraph{Our contribution}
 In this work, we provide two parallel algorithms for reachability analysis of hybrid systems
 that can leverage CPU  multi-core architectures 
 to  speed-up the  performance of the current technology.
 Our approach relies on the adaptation of 
 Holzmann's lock-free parallel breadth first exploration algorithm~\cite{parSpinHolzmann12}  
 recently implemented in the SPIN model checker.
We first extend the original algorithm  
 to deal with the symbolic reachable states and flowpipe
 computations that are the necessary ingredients of reachability analysis of hybrid systems. 
However, we notice that this first approach often results not ideal concerning the   
 CPU's cores utilization and the load balancing.
 This happens  when the number of symbolic states to be explored 
 is less than the number of processor's cores or when the cost
 of flowpipe computation varies drastically for different 
 reachable states. For this reason, we provide a second algorithm 
 that improves considerably the load balancing by efficiently
 precomputing the cost of certain operations.
 The two algorithms are implemented in XSpeed~\cite{DBLP:conf/hvc/RayGDBBG15} tool
 that is now able to handle also SpaceEx models using 
  the Hyst~\cite{DBLP:conf/hybrid/BakBJ15} model transformation 
  and translation tool for hybrid automaton models. Our experiments
  show a speed-up of up to $900\times$ on a navigation benchmark instance
  with respect to SpaceEx LGG scenario, comparing on the basis of equal number
	of post operations evaluated. The tool and the benchmarks reported in the paper 
	can be downloaded from \url{http://nitmeghalaya.in/nitm_web/fp/cse_dept/XSpeed/index_new.html}
\vspace{-2ex}
\paragraph{Related Work} In the last decade, there
has been an increasing interest in developing techniques for reachability 
analysis for hybrid systems. The tools currently~\cite{FLGDCRLRGDM11,DBLP:conf/cav/ChenAS13,KongGCC15,DBLP:conf/cade/PlatzerQ08} available 
can  perform reachability analysis of hybrid systems 
with linear dynamics and more than 200 variables~\cite{FLGDCRLRGDM11,DBLP:conf/cav/GuernicG09}. 
However, all these works are not suitable to exploit the powerful modern parallel multi-core architectures.
In our previous paper~\cite{DBLP:conf/hybrid/RayG15,DBLP:conf/hvc/RayGDBBG15} we addressed the problem of computing reachability 
analysis of continuous systems, but we did not handle hybrid systems.
Hence, to the best of our knowledge, this paper represents the 
first attempt to provide a parallel reachability analysis algorithm 
 for hybrid systems.

\vspace{-2ex}
\paragraph{Paper organization}
The rest of the paper is organized as follows. Section \ref{sec:prelims} 
provides the necessary background on hybrid automata and reachability analysis.
In Section \ref{sec:GJH-PBFS}, we show how to extend the Holzmann's lock-free parallel breadth first exploration algorithm~\cite{parSpinHolzmann12}  
to handle the state space exploration in hybrid automata. 
Section~\ref{sec:taskParallel}  addresses 
the load balancing problem introducing the notion of 
a task parallel algorithm. 
In Section \ref{sec:taskParallelSup} we
provide a concrete example of a task parallel algorithm 
in the context of support-function-based reachability analysis. 
Section \ref{sec:experiments} reports the experimental results to 
illustrate performance speed-up and CPU's core utilization. 
We conclude in Section~\ref{sec:conclusion}.

\section{Preliminaries}\label{sec:prelims}

A hybrid automaton is a mathematical model of systems exhibiting both continuous and discrete dynamics. A \emph{state} of a hybrid automaton is an n-tuple $(x_1,x_2,\ldots,x_n)$ representing the values of the $n$ continuous variables in an n dimensional system at an instance of time.

\begin{definition} A hybrid automaton is a tuple $(\L, \X, Inv, Flow, Init, \delta, \G, \M)$ where:
\begin{itemize}
	\item[-] $\L$ is the set of locations of the hybrid automata.
	\item[-] $\X$ is the set of continuous variables of the hybrid automata.
	\item[-] $Inv: \L \to \Reals^n$ maps every location of the automata to a subset of $\Reals^n$, called the invariant of the location. An invariant of a location defines a bound on the states within the location of the automata.
	\item[-] $Flow$ is a mapping of the locations of the automata to ODE equations of the form $\dot{x} = f(x), x \in \X$, called the flow equation of the location. A flow equation defines the evolution of the system variables within a location.
	\item[-] $Init$ is a tuple $(\ell_{init}, \C_{init})$ such that $\ell_{init} \in \L$ and $\C_{init} \subseteq Inv(\ell_{init})$. It defines the set of initial states of the automata.
	\item[-] $\G \subseteq \Reals^n$ is the set of guard sets of the automata.
	\item[-] $\M: \Reals^n \to \Reals^n$ is the set of assignment maps of the automata.
	\item[-] $\delta \subseteq \L \times \G \times \M \times \L$ is the set of transitions of the automata. A transition from a source location to a destination location in $\L$ may trigger when the \emph{state} $s$ of the hybrid automata lies in the guard set from $\G$. The map $\M$ of the transition transforms the state $s$ in the source location to a new state $s'$ in the destination location.	
\end{itemize} 
\end{definition}

A reachable state is a state attainable at any time instant $0 \leq t \leq T$ under its flow and transition dynamics starting from an initial state in $Init$. The flow dynamics evolves a state $(\ell,x)$ to another state $(\ell,y) $ in a location $\ell$ after $T$ time units such that $flow(x,T)=y$ and $flow(x,t) \in \textit{ Inv(} \ell$) for all  $t \in [0,T]$, where $flow$ is the solution to the flow equation of $Flow(\ell)$. Reachability analysis tools produce a conservative approximation of the reachable states of the automaton over a time horizon. Reachable states can be expressed as a union of \emph{symbolic states}. A symbolic state is a tuple \{$loc$, $\C$\} such that $loc \in \L$ and $\C \subseteq Inv(loc)$.

In Algorithm \ref{algo:reach} \cite{FLGDCRLRGDM11}, we show a generic reachability algorithm for hybrid automata. The algorithm maintains two data structures, $Wlist$ and $R$. $Wlist$ stores the list of symbolic states to initiate the exploration of reachable states and $R$ stores the already visited reachable states. $Init$ is a symbolic state depicting the initial states given as an input. $Wlist$ and $R$ are initialized to $Init$ and $\emptyset$ respectively in line \ref{listInit}. A symbolic state $S$ is removed from the $Wlist$ at each iteration and explored for reachable states. The operators $PostC$ and $PostD$ are applied consecutively. $PostC$ takes a symbolic state as argument and computes the reachable states under the continuous dynamics of the location. The result of $PostC$ is a symbolic state, say $R'$. When $R'$ is contained in $R$, there is no newly explored state and the next symbolic state in $Wlist$ is explored as shown in line \ref{goto}. On the contrary, when new states are found in $R'$ not in $R$, they are included in $R$ in line \ref{union} . The operator $PostD$ takes a symbolic state and returns a set of symbolic states obtained under the discrete dynamics. This new set of symbolic states, shown as $R''$, is added to the $Wlist$ for further exploration.

\begin{algorithm}[!htb]
\caption{Reachability Algorithm for Hybrid Automata}
\begin{algorithmic}[1] 
\Procedure{Reach-ha}{ha,$Init$}
\State $Wlist \gets Init$; $R \gets \emptyset$; \label{listInit}
\While {Wlist $\not=\emptyset$}
	\State $S \gets Wlist.pop()$; $R' \gets PostC(S)$
	\If {$R' \subseteq R$} go to step 3 \label{goto}
	\Else  \textit{ } $R \gets R \cup R'$ \label{union}
	\EndIf
	\State $R'' \gets PostD(R')$; $Wlist.add(R'')$ 
\EndWhile
\EndProcedure
\end{algorithmic}
\label{algo:reach}
\end{algorithm}

\section{Parallel Breadth First Search}\label{sec:GJH-PBFS}

It is worthy noting that the $PostC$ computation for symbolic states in $Wlist$ is independent of each other and therefore can be potentially parallelized. In this work, we exploit this inherent parallelism and propose parallel breadth first search (BFS) algorithm. In a multi-threaded implementation, threads can compute $PostC$ and $PostD$ operations in parallel, however the access to the shared data structure $Wlist$ and $R$ has to be mutually exclusive to avoid race condition. Mutual exclusion can be accomplished by using locks or semaphores, however at the price of additional overhead. Moreover, such an implementation may not ensure an effective load balancing as illustrated later in the following.

In Algorithm \ref{algo:pbfs} we show how to avoid the overhead of the mutual exclusion discipline by adapting the parallel lock-free breadth first search algorithm proposed in \cite{parSpinHolzmann12} to hybrid system state-space exploration. Our adapted algorithm is referred as A-GJH (Adapted Gerard J. Holzmann's) algorithm in the paper. The algorithm uses two copies of $Wlist$, each being a two-dimensional list of symbolic states. At each iteration, symbolic states are read from the $Wlist[t]$ copy and new symbolic states are written to the $Wlist[1-t]$ copy. At the first iteration the value $t$ is 0 and at the end of each iteration (see line \ref{rwswitch}) of the main while loop it is assigned to $1-t$. In this way, in the next iteration the write list becomes the read list and vice-versa. There are $N$ threads, one thread per core, which computes the post operations in parallel. All the symbolic states present in the row $Wlist[t][w]$ are sequentially processed by the thread indexed by $w$, shown in line \ref{thrtask1}-\ref{thrtask2}. A symbolic state when undergoes the $PostD$ operation generates a list of successor symbolic states, to be processed in the next iteration. Each successor state is added to the list $Wlist[1-t][w'][w]$, where $w'$ is randomly selected between one to $N$ (line \ref{symadd}). This random distribution of new states across rows of $Wlist[1-t]$ is used for load balancing. Since each thread indexed at $w$ writes to the list at $Wlist[1-t][w'][w]$, there is no write contention in $Wlist[1-t]$. Checking containment of symbolic states in the result data structure $R$ is avoided in the algorithm since it is an expensive operation. The exploration is instead bounded by the number of levels of exploration of the automata. The symbolic states of $Wlist[t]$ are explored when all the $N$ threads terminate and synchronize at line \ref{thsync} ensuring a breadth-first exploration. The algorithm terminates when there are no successor states in $Wlist[1-t]$ for further exploration or when the breadth exploration reaches a certain bound.  

\begin{algorithm}[!htb]
\caption{Adapted G.J. Holzmann's Algorithm}
\begin{algorithmic}[1] 
\Procedure{Reach-PBFS}{ha, $Init$, bound}
\State $t = 0$, $level = 0, N = Cores$
\State $Wlist[2][N][N]$	\Comment{$Wlist$ is a read/write list of symbolic states.}
\State $Wlist[t][1][1] = Init$
\State Create $w = 1$ to $N$ threads \Comment{N worker threads}
\Repeat
	\For {$q=1$ to $N$} \label{thrtask1}
		\For{each s in q}
			\State delete s from $Wlist[t][w][q]$
			\State $R'[i] \gets PostC(s)$ 
			\State $R''[i] \gets PostD(R'[i])$ 
			\State $w'=$  choose random $1 \ldots N$
			\State add $s' \in R''[i]$ to $Wlist[1-t][w'][w]$ \label{symadd} \label{thrtask2}
		\EndFor
	\EndFor
\State Barrier synchronization \Comment{Thread synchronization to ensure BFS} \label{thsync}
\If{$Wlist[1-t][1 \ldots N][1 \ldots N]$ is all empty}
	\State done = true;
\Else
	\State $t = 1 - t$	\label{rwswitch} \Comment{Read/Write switching}
	\State $level \gets level + 1$
\EndIf
\Until{$!done \textit{ OR } level > bound$ }
\EndProcedure
\end{algorithmic}
\label{algo:pbfs}
\end{algorithm}

\begin{figure}[!htb] 
   \includegraphics[width=\textwidth]{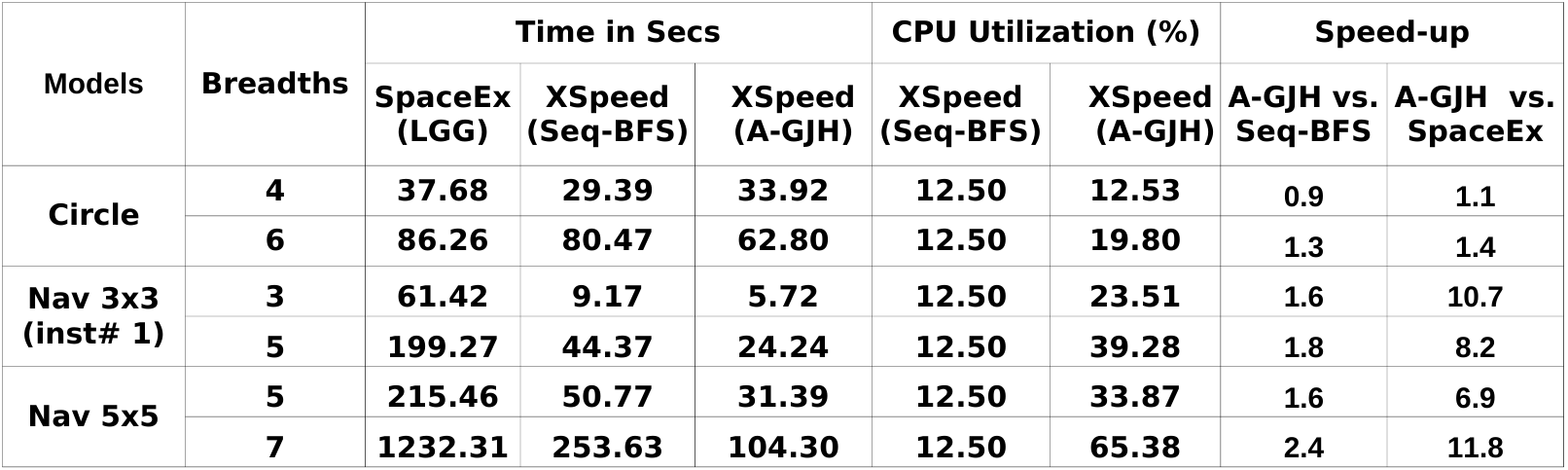}
\caption{Moderate CPU utilization and performance speed-up with A-GJH algorithm.\label{tab:GJHadopt}}
\label{fig:LBMotivation}
\end{figure}

 
\subsection{\label{sub:LoadBalancing}Load Balancing}
The clever use of the data structures in A-GJH algorithm provides freedom from locks and reasonable load balancing when there are sufficiently large number of symbolic states in the waiting list. However, the load balancing in A-GJH does not perform well when the number of symbolic states in the waiting list is less than the number of cores of the processor. Since the symbolic states are distributed randomly to the $N$ cores in line \ref{thrtask2} for exploration, some of the cores remain idle when there are not enough states to be explored. For this reason there are benchmarks for hybrid systems reachability analysis where an incorrect load balancing results in a low time utilization of the available cores. For example, Figure \ref{fig:LBMotivation} shows that while the A-GJH running in a 4 core processor provides some improvements in performance compared to SpaceEX LGG (Le Guernic-Girard) and the sequential BFS, the CPU core utilization remain very modest. The best utilization is $65\%$ in the NAV 5X5 benchmark for 7 levels exploration and the worst is $12\%$ in the Circle model. In the Circle model, there are only 2 symbolic states for exploration at each BFS iteration, keeping most of the CPU cores idle.

Another principle problem in load balancing is that the cost of flowpipe computation may vary drastically for different symbolic states. This is illustrated on a $3 \times 3$ Navigation benchmark in Figure \ref{fig:ldUndersubs} \cite{FehnkerI04}. The benchmark models the motion of an object in a 2D plane partitioned as a $3 \times 3$ grid. Each cell in the grid has a width and height of 1 unit and have a desired velocity $v_d$. In Figure \ref{fig:ldUndersubs} the cells are numbered from 1 to 9 and the respective desired velocities are shown with a directed vector. Note that there is no desired velocity shown in cell 3 and 7 to distinguish them from the others. Cell 3 is the target whereas cell 7 is  the unsafe region. The actual velocity of the object in a cell is given by the differential equation $\dot{v} = A(v - v_d)$, where $A$ is a $2\times2$ matrix. There is an instantaneous change of dynamics on crossing over to an adjacent cell. The green box is an initial set where the object can start with an initial velocity. The red region shows the reachable states under the hybrid dynamics after a finite number of cell transitions.

\begin{figure}[!htb]	
	\subfloat[Flowpipe after exploring 2 levels having 4 new symbolic states\label{fig:a}]
		{\centering{}\includegraphics[scale=0.35]{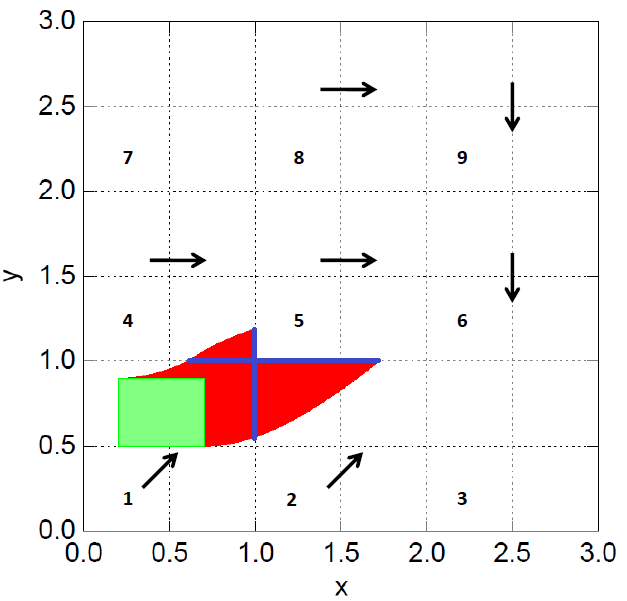}} \quad{}
	\subfloat[Flowpipe after exploring 3 levels\label{fig:b}]
		{\centering{}\includegraphics[scale=0.35]{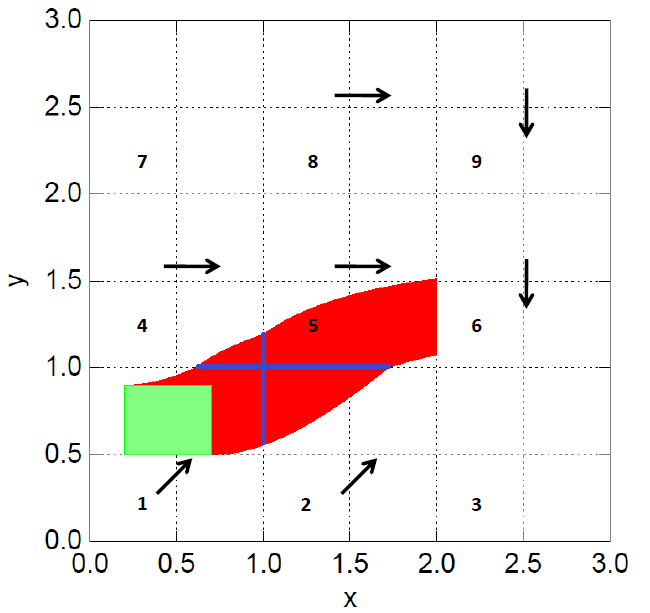}}
\caption{Illustrating Load Balancing Problem with Flowpipes of Varying Cost\label{fig:ldUndersubs}}
\end{figure}

Figure \ref{fig:ldUndersubs} shows the reachable states after two and three levels of exploration in Algorithm \ref{algo:pbfs}. There are four symbolic states, $S_1$, $S_2$, $S_3$ and $S_4$, waiting to be explored after two levels of bfs, shown in blue. The symbolic states $S_1$, $S_2$ are \{$1$, $B_1$\}, \{$1$, $B_2$\} and $S_3$, $S_4$ are \{$5$, $B_3$\} and \{$5$, $B_4$\} respectively, where $1$ and $5$ are the location ids and $B_1$, $B_2$, $B_3$ and $B_4$ are the blue regions in the boundary of location with ids 1 and 4, 1 and 2, 4 and 5, 2 and 5 respectively. Algorithm \ref{algo:pbfs} spawns four threads, one each to compute the flowpipe from the symbolic states. In a four core processor, this seems an ideal load division. However, observe in Figure \ref{fig:b} that out of the four flowpipes, the two from $S_1$ and $S_2$ do not lead to new states since they start from the boundary of location id 1 and the dynamics pushes the reachable states outside the location invariant. This implies that the flowpipe computation cost for $S_1$ and $S_2$ are low and the two cores assigned to these flowpipe computation finish early and waits at the synchronization point until the remaining two busy cores complete. Such a situation keeps the available cores under-utilized due to the idle waiting of $50\%$ of cores. Similarly, the cost of $PostD$ operation also varies causing idle waiting of threads assigned to low cost computations.

\section{Task Parallel Algorithm}\label{sec:taskParallel}
In the following we propose an alternative approach to improve load balancing based on computational cost of the tasks encountered during the exploration. The idea is to identify the \emph{atomic tasks} in a flowpipe ($PostC$) computation. The atomic tasks across all flowpipe computations in the breadth search is a measure of the total work-load, at the present breadth of the exploration. For an effective load balancing, this work-load is distributed evenly between the cores of the processor. Similar tasks distribution can be applied also to the computation of discrete transitions ($PostD$). Algorithm \ref{algo:task-pbfs} shows this approach.

\begin{algorithm}[!htb]
\caption{Task Parallel Breadth First Exploration}
\begin{algorithmic}[1] 
\Procedure{Reach-Task-PBFS}{ha, $Init$, bound}
\State $t = 0$, $level = 0$, $N = 1$, $CostC=0$, $CostD=0$
\State $Wlist[2][N]$	\Comment{$Wlist$ is a read/write list of symbolic states.}
\State $Wlist[t][0] = Init$

\Repeat
	\For{each s in $Wlist[t]$}
		\State $CostC = CostC + flow\_cost(s)$ \label{flowCost1}
		\State Break $PostC(s)$ into atomic tasks and add to Tasks list \label{divide1}
	\EndFor
	\State $Tasks\_Per\_Core$ = $\lceil CostC/\#Cores \rceil$ \Comment{Even distribution of tasks to cores}
	\For{Threads with id $w = 1$ to $N$} \Comment{N worker threads}\label{parTask}
		\State Execute Tasks\_Per\_Core exclusive tasks from the Tasks list \label{parTask1}
		\State Add results to Res
	\EndFor
	\State Barrier Synchronization \label{parTaskFin} 
	\For{each s in $Wlist[t]$, create thread indexed w} \Comment{w worker threads}
		\State $flow[w] = Res.join()$ \Comment{Combine task results to get flowpipe} \label{join1}  
	\EndFor
	\State Barrier Synchronization
	\For{each s in $Wlist[t]$}
		\State $CostD = CostD + jump\_cost(flow[s])$ \label{flowCost}
		\State Break $PostD(s)$ into atomic tasks and add to Tasks list \label{divide}
	\EndFor
	\State $Tasks\_Per\_Core$ = $\lceil CostD/\#Cores \rceil$ \Comment{Even distribution of tasks to cores}
	\For{Threads with id $w = 1$ to $N$} \Comment{N worker threads}\label{parTask}
		\State Execute Tasks\_Per\_Core exclusive tasks from the Tasks list
		\State Add results to Res
	\EndFor
	\For{each s in $Wlist[t]$, create thread indexed w} \Comment{w worker threads}
	\State $R'[w] = Res.join()$ \Comment{Combine task results to get succ symbolic state} \label{join2}
	\State add each s in $R'[w]$ to $Wlist[1-t][w]$ \label{listWrite}
	\EndFor
	\State Barrier Synchronization			
	
	\If{$Wlist[1-t]$ is empty} \textit{ done = true}
	\Else
		\State $t = 1 - t$	\Comment{Read/Write switching}
		\State $N$ = sum of size of all lists in $Wlist[t]$ 
		\State Resize Wlist[1-t][N], $level = level + 1$, $CostC = 0$, $CostD = 0$
	\EndIf
\Until{$!done \textit{ OR } level > bound$ }
\EndProcedure
\end{algorithmic}
\label{algo:task-pbfs}
\end{algorithm}

In particular, the instruction in line \ref{flowCost1} obtains an estimated cost of computing a flowpipe from a symbolic state using the function $flow\_cost$. After the flowpipe costs for all symbolic states in $Wlist$ are computed, line \ref{divide1} breaks the flowpipe computations into atomic tasks and adds them into a tasks list. In line \ref{parTask1} the atomic tasks are evenly assigned to the processors cores. In line \ref{join1} the results of the atomic tasks computed in parallel are then joined together to obtain a flowpipe. Similar load division is carried out for $PostD$ operation. The successor states obtained from each flowpipe are added in the write list $Wlist[1-t][w]$, by a thread indexed at $w$ in line \ref{listWrite}. The exclusive access of the threads to the columns of $Wlist[1-t]$ eliminate the write contention. 

A challenge in this approach is to devise an efficent method for computing the cost of flowpipe and discrete-jump computation for balanced load distribution. Efficient methods and data structures for splitting $PostC$, $PostD$ into atomic tasks and merging their results are very important in order to avoid that the extra required overhead would affect the gain obtained with the parallel exploration. In the next section, we propose some procedures to compute cost of post operations for a support-function algorithm.

\section{Task Parallellism in Support Function Algorithm} \label{sec:taskParallelSup}

A common technique in flowpipe computation consists in discretizing the time horizon $T$ into $N$ intervals with a chosen time-step $\tau=T/N$ and over-approximating the reachable states in each interval by a closed convex set, say $\Omega$. A flowpipe can be represented as a union of convex sets as shown in Equation \ref{eqn:ReachSet}.

\begin{equation}\label{eqn:ReachSet}
\begin{split}
	& Reach_{[0,T]}(\X_0) \subseteq \bigcup_{i=0}^{N-1} \Omega_i \\
	& Reach_{[i\tau,(i+1)\tau]}(\X_0) \subseteq \Omega_i \textit{, } \forall 0 \leq i < N \\
\end{split}
\end{equation}
 
The representation of the convex sets $\Omega$ is a key in the efficiency and scalability of reachability algorithms. Polytopes \cite{DBLP:conf/hybrid/ChutinanK99,Frehse08} and zonotopes\cite{DBLP:conf/hybrid/Girard05} are popular geometric objects suitable to represent convex sets. More recently, support functions \cite{GirardLG08,DBLP:conf/cav/GuernicG09} have been proposed as a functional representation of compact convex sets. SpaceEx \cite{FLGDCRLRGDM11} and XSpeed \cite{DBLP:conf/hvc/RayGDBBG15} tools implement support-function-based algorithms for flowpipe computation due to the promise it provides in scalability. We now present the preliminaries of support functions necessary to introduce the task parallelism in the algorithm.

\subsection{Support functions}
\begin{definition}\cite{RW98}
 Given a nonempty compact convex set $\X \subset \Reals^n$ the \emph{support function} of $\X$ is a function $sup_{\X}:\Reals^n \to \Reals$   defined as:
 \begin{equation}\label{def:sf}
  sup_{\X}(\ell) = \sup\{\ell \cdot x \mid x \in \X\}
 \end{equation}
\end{definition} 

\begin{definition}\label{eqn:tempPoly}
Given a support function $sup_{\X}$ of a compact convex set $\X$ and a finite set of template directions $\D$, A template polytope of the convex set $\X$ is defined as:
\begin{equation}
	Poly_\D (\X) = \bigcap_{l_i \in \D}{l_i.x \le sup_{\X}(\ell_i)}
\end{equation}
\end{definition}
  
The support-function algorithm in \cite{GirardLG08} proposes a flowpipe computation by computing the template polyhedral approximation of the convex sets $\Omega$ by sampling their support functions in the template directions. The algorithm is for linear dynamics with non-deterministic inputs of the form:

\begin{equation}
\dot{x} = Ax(t) + u(t), \qquad u(t) \in \U, x(0) \in \X_0 \label{eq:flowEquation}
\end{equation}
where $x(t),u(t) \in \Reals^n$, $A$ is a real-valued $n \times n$ matrix and $\X_0, \U \subseteq \Reals^n$ are the initial states and the set of inputs given as compact convex sets.

\subsection{Flowpipe Cost Computation}
We define the cost of computing a flowpipe by considering a support function sample as the logical atomic task in the computation.

\begin{definition} \label{def:flowCost}
Given a time horizon $T$, time discretization factor $N$, a set of template directions $\D$ and an initial symbolic state $s=(loc,\C)$, the cost of computing the flowpipe with postC(s) is given by:
\begin{equation}
flow\_cost(s) = j.|\D| 
\begin{cases}
& j = max \big\{ i \mid 0 \leq i \leq N \textit{, }\forall 0 \leq k \leq i, \Omega_k \vdash Inv(loc) \big\} \\
& \Omega_k \vdash Inv(loc) \textit{ if and only if } \Omega_k \cap Inv(loc) \neq \emptyset \\
\end{cases}
\end{equation}     
\end{definition}

The longest sequence $\Omega_0$ to $\Omega_j$ such that the convex sets satisfy the location invariant is identified. Since computing polyhedral approximation of the convex sets $\Omega$ requires sampling support function in each direction of the set of template directions $\D$, the $flow\_cost$ essentially gives us the number of support function samplings, i.e. the atomic tasks, that is to be completed in order to compute the flowpipe. To compute $flow\_cost$, it is necessary to find the longest sequence $\Omega_0$ to $\Omega_j$ satisfying the location invariant $Inv(loc)$. Assuming polyhedral invariants, checking the invariant satisfaction can be performed using the following proposition.

\begin{proposition}\cite{LeGuernic09} \label{invCheck}
Given a polyhedra $\I = \bigwedge_{i=1}^{m} \ell_{i}\cdot x \leq b_i$ and a convex set $\Omega$ represented by its support function $sup_{\Omega}$,
$\Omega \vdash \I$ if and only if $-sup_{\Omega}(-\ell_i) \leq b_i$, for all $1 \leq i \leq m$.
\end{proposition}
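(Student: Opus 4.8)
The plan is to strip away the support-function notation and reduce the claim to a statement about how far $\Omega$ extends along each facet normal $\ell_i$. The one computation I would do up front is the dual identity $-sup_\Omega(-\ell_i) = -\sup\{-\ell_i\cdot x \mid x\in\Omega\} = \inf\{\ell_i\cdot x \mid x\in\Omega\}$, obtained by pulling the minus sign through the supremum ($-\sup S=\inf(-S)$). Writing $\underline h_i := \inf\{\ell_i\cdot x \mid x\in\Omega\}$, which is attained because $\Omega$ is compact, the right-hand condition ``$-sup_\Omega(-\ell_i)\le b_i$ for all $i$'' becomes the transparent ``$\underline h_i\le b_i$ for all $i$'': for each single half-space $\ell_i\cdot x\le b_i$ bounding $\I$, at least one point of $\Omega$ satisfies it. The whole proof then reduces to relating this per-constraint feasibility to the joint feasibility that $\Omega\cap\I\neq\emptyset$ (i.e.\ $\Omega\vdash\I$) expresses.

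I would dispatch the forward direction first, since it is immediate. If $\Omega\vdash\I$, choose any $p\in\Omega\cap\I$. From $p\in\I$ we get $\ell_i\cdot p\le b_i$ for all $i$, and from $p\in\Omega$ we get $\underline h_i\le \ell_i\cdot p$; chaining these gives $\underline h_i\le b_i$, i.e.\ $-sup_\Omega(-\ell_i)\le b_i$, for every $i$.

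For the converse I would reason by contraposition with a separating-hyperplane argument: assuming $\Omega\cap\I=\emptyset$ with both sets convex and $\Omega$ compact, a separating direction exists, and I would try to show it can be taken among the facet normals $\ell_i$, forcing $\underline h_i> b_i$ and thereby negating the right-hand side. I expect this to be the main obstacle, because the infeasibility certificate of the system $\{x\in\Omega\}\cup\{\ell_i\cdot x\le b_i\}_i$ is in general a nonnegative \emph{combination} of the $\ell_i$ (Farkas/LP duality), so no single facet normal need separate $\Omega$ from $\I$. The clean equivalence is therefore genuine in the forward direction and whenever $\I$ has a single constraint, whereas for general $m$ the per-constraint test $\underline h_i\le b_i$ is best understood as a \emph{sound, conservative} surrogate for intersection: it can only over-report contact and so never causes the flowpipe-truncation step of Definition~\ref{def:flowCost} to discard an $\Omega_k$ that actually meets $Inv(loc)$. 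My plan would thus be to prove the equivalence outright in the regime the algorithm exercises and, for the general statement, to record that the condition is used precisely as this safe over-approximation --- and it is this reconciliation, not any calculation, that I anticipate as the delicate point.
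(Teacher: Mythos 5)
The paper offers no proof of this proposition --- it is imported wholesale from \cite{LeGuernic09} --- so there is no in-paper argument to compare against, and your analysis has to stand on its own; it does, and it uncovers something real. Your forward direction is exactly right: the dual identity $-sup_{\Omega}(-\ell_i) = \inf\{\ell_i\cdot x \mid x\in\Omega\}$ reduces the condition to per-constraint feasibility, and any $p \in \Omega \cap \I$ witnesses $\inf\{\ell_i\cdot x \mid x\in\Omega\} \le \ell_i\cdot p \le b_i$ for every $i$. More importantly, your diagnosis of the converse is correct and deserves a concrete counterexample: in $\Reals^2$ take $\I = \{x \le -1\}\cap\{y \le -1\}$ and let $\Omega$ be the segment from $(-2,2)$ to $(2,-2)$; then $\inf_\Omega x = \inf_\Omega y = -2 \le -1$, so the test passes for each constraint, yet no single point of $\Omega$ satisfies both constraints at once, so $\Omega\cap\I=\emptyset$. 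Hence, with $\vdash$ defined as non-empty intersection (as Definition~\ref{def:flowCost} does), the proposition holds as an ``only if'' but fails as an ``if'' once $m>1$ --- the infeasibility certificate is in general a conic combination of the $\ell_i$ rather than a single facet normal, exactly as you anticipated, so no separating-hyperplane argument can rescue the biconditional. The statement in \cite{LeGuernic09} that is a genuine equivalence is either the single-halfspace case $m=1$ or the containment test $\Omega\subseteq\I$ if and only if $sup_{\Omega}(\ell_i)\le b_i$ for all $i$; the transcription here swaps containment for intersection without weakening ``if and only if'' to ``only if.'' Your closing reconciliation is the right one and is the most that can honestly be proved: a failed test certifies $\Omega_k\cap Inv(loc)=\emptyset$, so using the test to truncate the flowpipe in Definition~\ref{def:flowCost} can only over-count the sequence length $j$, which is conservative for the cost estimate and for reachability.
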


A procedure to identify the largest sequence is to apply Proposition \ref{invCheck} to each convex set starting from $\Omega_0$ iteratively  until we find a $\Omega_j$ such that $\Omega_j \nvdash Inv(loc)$. The time complexity of the procedure is $O(m\cdot N \cdot f)$, where $f$ is the time for sampling the support function, $m$ is the number of invariant constraints and $N$ is the time discretization factor. We propose a cheaper algorithm with fewer support functions samplings for a class of linear dynamics $\dot{x} = Ax(t) + u$, with $u$ being a fixed input. Fixed input leads to deterministic dynamics allowing to compute the reachable states symbolically at any time point .

\begin{proposition}\label{exactReach}
Given an initial set $\X_0$ and dynamics $\dot{x} = Ax(t) + u$ with $A$ being invertible, the set of states reachable at time $t$ is given by:
\begin{equation}
X(t)=e^{At}x_0 \oplus A^{-1}(e^{At}-I)(v)
\end{equation} 
\end{proposition}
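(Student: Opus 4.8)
The plan is to reduce this set-valued statement to the classical closed-form solution of an affine linear ODE and then lift that solution to sets via the Minkowski sum. First I would fix a single initial point $x_0 \in \X_0$ and solve the initial value problem $\dot{x} = Ax(t) + u$, $x(0) = x_0$, by the integrating-factor (variation of parameters) method: multiplying both sides by $e^{-At}$ gives $\frac{d}{dt}\big(e^{-At} x(t)\big) = e^{-At} u$, which I would integrate from $0$ to $t$ to obtain $e^{-At}x(t) - x_0 = \big(\int_0^t e^{-As}\, ds\big)u$, using that the fixed input $u$ factors out of the integral.

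The key computational step is evaluating $\int_0^t e^{-As}\, ds$, and this is exactly where invertibility of $A$ enters. Since $\frac{d}{ds}\big(-A^{-1}e^{-As}\big) = e^{-As}$, the antiderivative is available in closed form and $\int_0^t e^{-As}\, ds = A^{-1}(I - e^{-At})$. Multiplying back through by $e^{At}$ and using that $A^{-1}$ commutes with $e^{At}$ (since $A$ commutes with its own matrix exponential), the single-trajectory solution simplifies to $x(t) = e^{At}x_0 + A^{-1}(e^{At} - I)u$.

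Finally I would pass from one trajectory to the reachable set. Because the dynamics is deterministic for a fixed input $u$, each reachable state at time $t$ comes from exactly one initial point, so $X(t) = \{\, e^{At}x_0 + A^{-1}(e^{At}-I)u : x_0 \in \X_0 \,\}$. Recognising the constant term $A^{-1}(e^{At}-I)u$ as a singleton, this set equals the Minkowski sum $e^{At}\X_0 \oplus A^{-1}(e^{At}-I)u$, which is precisely the claimed expression (with $v = u$).

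I do not anticipate a genuine obstacle, as the result is a direct consequence of the variation-of-parameters formula; the only point demanding care is the explicit use of the invertibility of $A$ to obtain the closed-form antiderivative. The hard part conceptually is merely justifying that invertibility is essential here: were $A$ singular one would be forced to retain the integral (or the power series $\sum_{k\geq 1} A^{k-1}t^k/k!$) rather than a finite symbolic expression, which would undermine the cheap symbolic reachable-set computation that motivates the proposition.
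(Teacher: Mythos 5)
Your derivation is correct, and it is exactly the standard variation-of-parameters argument that the paper implicitly invokes: Proposition~\ref{exactReach} is stated without proof as the classical closed-form solution of an affine ODE with constant input, so there is no alternative argument in the paper to compare against. Your computation of $\int_0^t e^{-As}\,ds = A^{-1}(I-e^{-At})$ and the commutation of $A^{-1}$ with $e^{At}$ are the right justifications for where invertibility is used, and you correctly repair the paper's notational slips (the statement writes $x_0$ where the set $\X_0$ is meant and $v$ where the input $u$ is meant) by reading the formula as the Minkowski sum $e^{At}\X_0 \oplus \{A^{-1}(e^{At}-I)u\}$.
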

 
The idea of the procedure shown in Algorithm \ref{algo:InvariantCheck}, is to use a coarse time-step to compute reachable states using Proposition \ref{exactReach} and detect an approximate time for crossing the invariant. Once the invariant crossing time is detected, similar search is followed by narrowing the time-step for a finer search near the boundary of the invariant for a desired precision. The procedure is illustrated on a toy model of a counter clockwise circular rotation dynamics as shown in Figure \ref{fig:CircleAutomaton}. The model has two locations with the same dynamics but different invariants. The transition assignment maps does not modify the variables. Figure \ref{fig:CircleTimeSearch} illustrates the procedure. The initial set on the location is shown in blue. The red sets are the reachable images of the initial set computed at coarse time steps to detect invariant crossing, followed by computing the images at finer time-steps shown in green near the invariant boundary for detecting an upper bound on the time of crossing the invariant with a desired precision. After computing this time, say $t'$, the $flow\_cost$ is obtained using Definition \ref{def:flowCost} with $j=t'/\tau$. However, the problem with the procedure is when it is possible for a reachable image to exit and re-enter the invariant within the chosen time-step. In such cases, the approximation error in the time returned by the procedure can be substantial. Constant dynamics and convex invariant $\I$ will not have such a scenario and the approximation error can be bounded.

\begin{theorem}
For a class of dynamics $\dot{x} = k$ , where $k$ is a constant, let $t$ be the exact time when reachable states from a given initial set $\X_0$ violate the convex location invariant $\I$. Let $\delta_C$ and $\delta_F$ be the coarse and fine time steps chosen to detect approximate time $t'$ of invariant violation. The approximation error $|t-t'| \leq \delta_F$.   
\end{theorem}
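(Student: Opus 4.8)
The plan is to analyze the procedure of Algorithm~\ref{algo:InvariantCheck} directly, exploiting the crucial simplification afforded by the constant dynamics $\dot{x} = k$. Under this dynamics every reachable point translates uniformly along the direction $k$, so each trajectory is a straight line $x(t) = x_0 + kt$. Since the location invariant $\I$ is convex, the set of times for which a single trajectory remains inside $\I$ is an interval, and consequently the reachable image can cross the invariant boundary \emph{at most once} (from inside to outside). This rules out the problematic exit-and-re-enter scenario flagged in the paragraph preceding the theorem, and is the structural fact that makes the error bound provable.

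\medskip

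\noindent\textbf{Key steps.} First I would make precise the monotonicity claim: define, for the whole initial set $\X_0$ evolving under the flow, the predicate ``$\Omega(s) \vdash \I$'' (the reachable set at time $s$ still meets the invariant). Using convexity of $\I$ and linearity of the translation $x_0 \mapsto x_0 + ks$, I would argue that this predicate is monotone in $s$: once it fails it never holds again, so there is a well-defined exact crossing time $t$ with $\Omega(s)\vdash\I$ for $s<t$ and $\Omega(s)\nvdash\I$ for $s>t$. Second, I would describe the two-phase search of the procedure. In the coarse phase, the algorithm steps forward by $\delta_C$ and detects the first coarse grid point $m\delta_C$ at which the invariant is violated; by monotonicity the exact time $t$ then lies in the bracket $((m-1)\delta_C,\, m\delta_C]$. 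Third, the fine phase subdivides this bracket by steps of $\delta_F$ and returns the first fine grid point $t'$ at which violation is detected. Again by monotonicity, the detected $t'$ is the smallest fine-grid multiple that is $\ge t$, so $t'$ overestimates $t$ but by strictly less than one fine step: $t \le t' < t + \delta_F$, i.e. $0 \le t' - t < \delta_F$. Taking absolute value gives $|t - t'| \le \delta_F$, which is exactly the claim. Finally I would note that the coarse step does not degrade the bound, since the fine search always brackets the true crossing before reporting, so only $\delta_F$ governs the final resolution.

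\medskip

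\noindent\textbf{Main obstacle.} The genuinely delicate point is establishing the \emph{single-crossing} (monotonicity) property at the level of the whole reachable \emph{set}, not just a single trajectory. For one point the statement is immediate from convexity of $\I$; for the image set $\Omega(s)$ I must be careful that ``$\Omega(s)$ meets $\I$'' can in principle stay true for a while even after the bulk of the set has left, so I should either (i) work with the appropriate notion of invariant satisfaction used in Proposition~\ref{invCheck} (namely $-sup_{\Omega}(-\ell_i)\le b_i$ for all constraints $i$, which corresponds to full containment rather than mere intersection), or (ii) restrict to the containment reading under which monotonicity is clean. Reconciling this with the intersection-based definition $\Omega_k \vdash Inv(loc) \iff \Omega_k \cap Inv(loc) \neq \emptyset$ used in Definition~\ref{def:flowCost} is where the argument needs the most care; the constant-translation structure is what ultimately guarantees that, under either reading, the satisfaction interval is contiguous and the one-sided rounding bound $|t-t'|\le\delta_F$ survives.
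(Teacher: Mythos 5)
Your proposal is correct and follows essentially the same route as the paper's (much terser) proof: constant dynamics plus convexity of $\I$ rule out exit-and-re-entry, the reachable set $X(t) = \X_0 \oplus kt$ is represented exactly by its support function, and the fine-step sampling then pins the detected crossing time to within $\delta_F$ of the true one. Your elaboration of the coarse/fine bracketing and your flag about the intersection-versus-containment reading of $\vdash$ are more careful than what the paper writes, but they do not change the underlying argument.
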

\begin{proof}
Constant dynamics have a fixed direction of dynamics and therefore, convexity property ensures that the reachable states cannot exit and re-enter $\I$. Reachable states $X(t) = X_0 \oplus kt$ is exactly represented using its support function. Algorithm \ref{algo:InvariantCheck} samples the support function at $\delta_F$ time-steps to identify the time instant $t'$ of crossing $\I$, which implies $|t-t'| \leq \delta_F$.      
\end{proof}

\begin{algorithm}[!htb]
\caption{Detecting time of invariant crossing with varying time-step}
\begin{algorithmic}[1] 
\Procedure{Invariant-Crossing Time Detection}{$\I$, $\X_0$, T}
	\State $discretization = 10$ , $\tau = T/discretization$ \Comment{Coarse Time-step}
	\State $i = 0$, $\R(0) = \X_0$
	\While{$\R(\tau . i) \vdash \I$}
		$i = i + 1$  \Comment{Widened Search}
	\EndWhile
	\If {$i > 1$} $t1 = \tau * (i-1)$
	\Else \textit{ return} 0 
	\EndIf 
	\State $\tau = \tau/discretization$, $i = 0$ \Comment{Fine Time-step}
	\While{$\R(t1+i*\tau) \vdash \I$} $i = i + 1$ \Comment{Narrowed Search}
	\EndWhile
	\State return $t1 + i*\tau$ \Comment{An upper bound on invariant crossing time}
\EndProcedure
\end{algorithmic}\label{algo:InvariantCheck}
\end{algorithm}


\begin{figure}[!htb]
\subfloat[Hybrid Automaton\label{fig:CircleAutomaton}]
	{\centering{}\includegraphics[width=5.3cm,height=2.7cm]{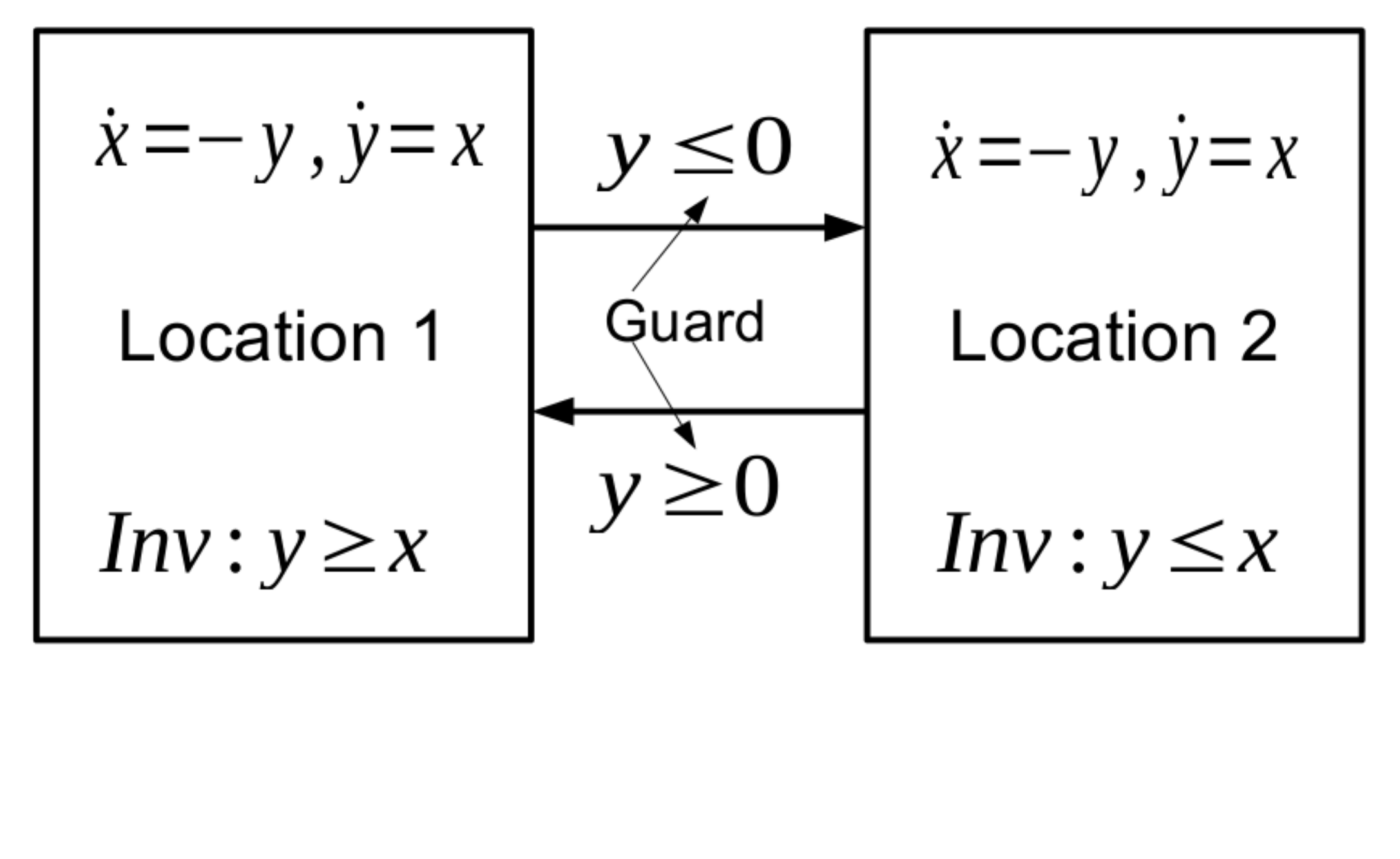}} \quad{}
\subfloat[Searching Time of Invariant Crossing with Widening-Narrowing Time-steps\label{fig:CircleTimeSearch}]
	{\centering{}\includegraphics[scale=0.35]{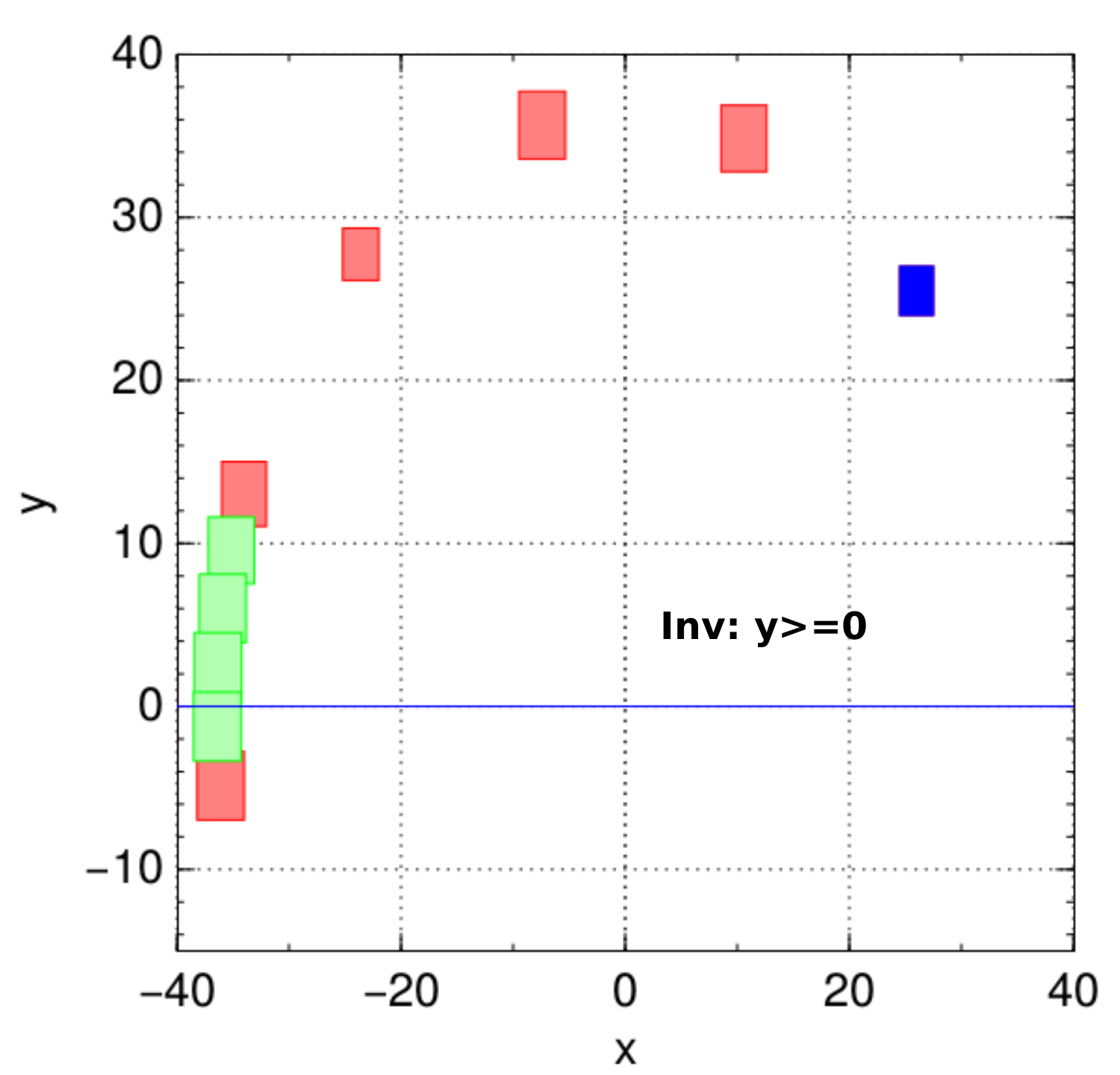}}
\caption{(a) A hybrid automaton of a toy example with circle dynamics. (b) Evaluation of the invariant crossing time detection algorithm on a circle model. The red sets are computed at coarse time-steps and the green sets computed at finer time-steps \label{fig:CircletimeSearch}}
\end{figure}

\subsection{Discrete-Jump Cost Computation}\label{Cost-postD}

The $PostD$ computation performs the flowpipe intersection with the guard set followed by image computation. Considering a flowpipe having sets $\Omega_0$ to $\Omega_j$, each of these sets are applied with intersection operation and a map for non-empty intersection. Assuming intersection and image computation as the atomic task, the cost of $PostD$ on a flowpipe $\cup_{i=0}^{j} \Omega_i$ will be $j$, which can be obtained from the $flow\_cost$ computation in Definition \ref{def:flowCost}. The addition of the cost of post operations for all symbolic states in the waiting list is used to uniformly distribute atomic tasks of post operations across the cores using multi-threading. Further details on the data-structures and task distribution is omitted due to the lack of space. The task parallel support-function-algorithm is referred as TP-BFS in the text that follows.

\section{Experiments}\label{sec:experiments}
The parallel algorithms are implemented with multi-threading using OpenMP compiler directives. Figure \ref{fig:experiments} shows the performance comparison between Reachability analysis with SpaceEx (LGG), Sequential BFS, A-GJH and TP-BFS. The benchmarks are: A two dimensional oscillator circuit model, model of a bouncing ball under gravity, model of a circular rotation dynamics, three instances of the Navigation benchmark with 9 ($3 \times 3$) locations and one instance each of 25 ($5 \times 5$) and 81 ($9 \times 9$) locations respectively. We conducted our experiments in a 4 core Intel i7-4770, 3.40GHz and 8GB RAM with hyper-threading enabled. The results are for a time horizon of 10 units, box template direction as parameters. The sampling time in Circle model is $1e-5$, in Oscillator, Timed Bouncing Ball, Navigation $3 \times 3$ and $5 \times 5$ instances are $1e-4$ and $0.1$ units for Navigation $9 \times 9$ instance respectively.
 
\begin{figure}[!htb]
\centering{}
\subfloat[XSpeed\label{fig:XSpeed_Nav09}]
	{\includegraphics[scale=0.3]{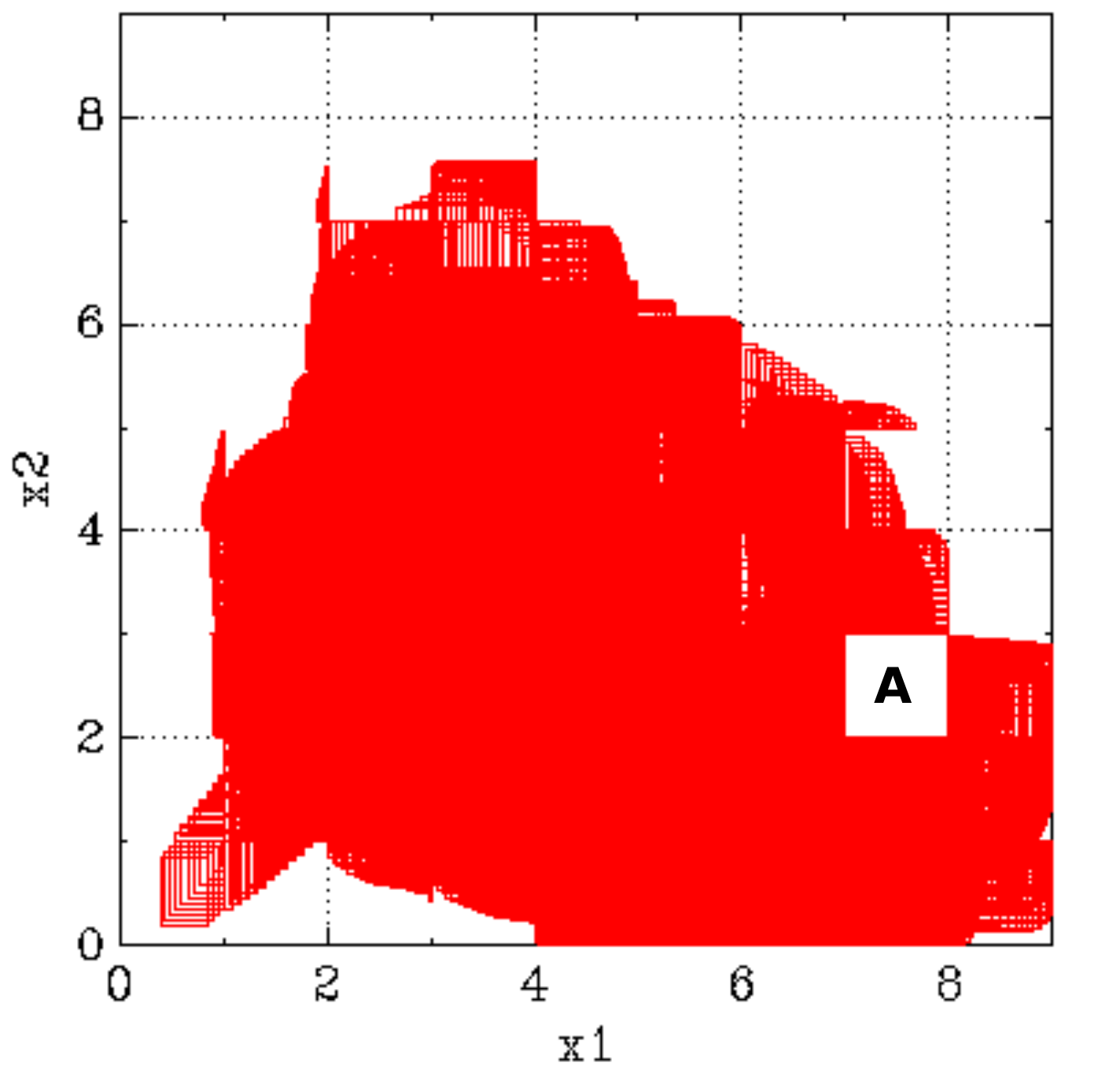}}
	 \quad{}
\subfloat[SpaceEx(LGG)\label{fig:SpaceEx_Nav09}]
	{\includegraphics[scale=0.3]{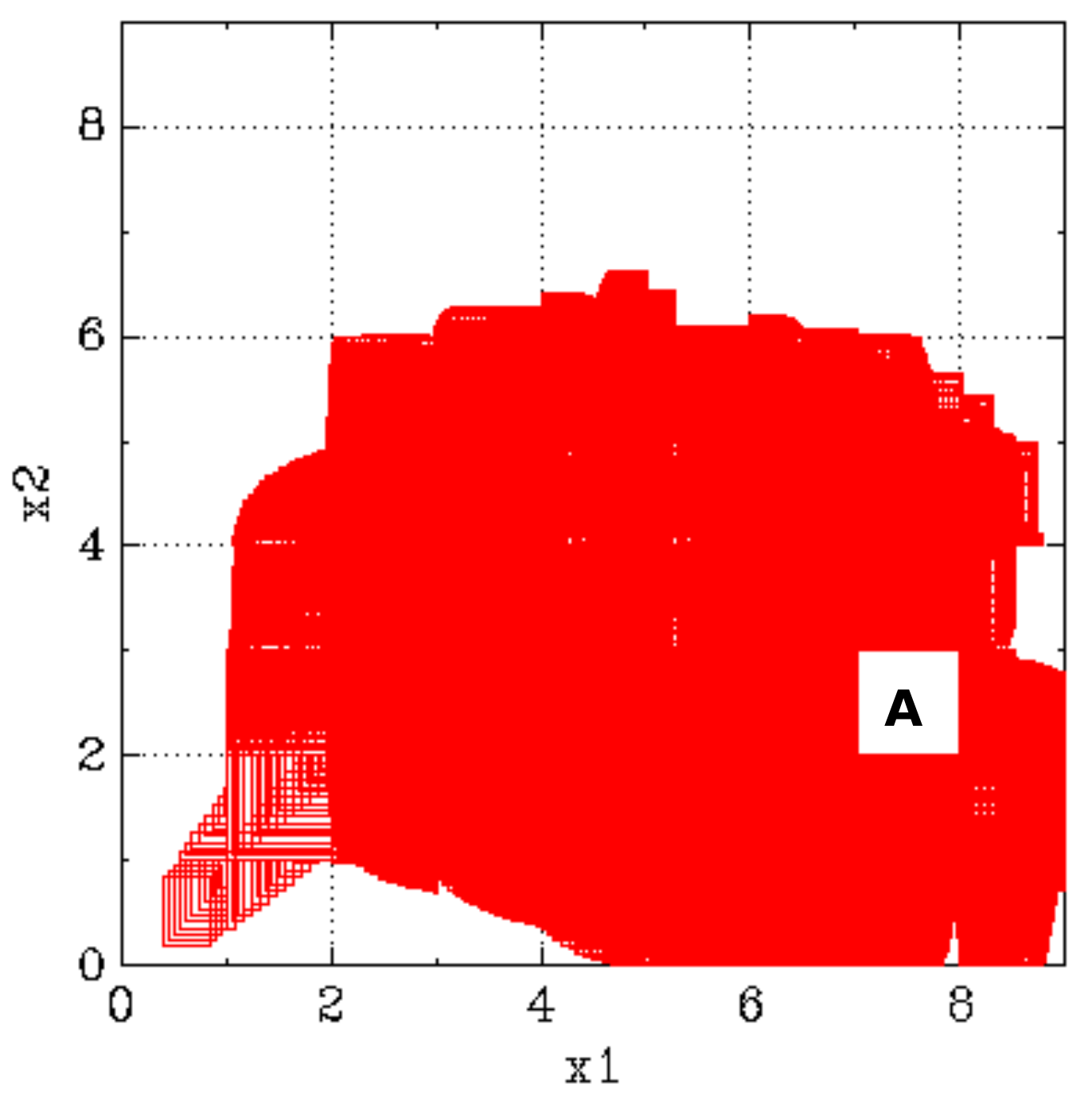}}
\caption{Reachable region of a $9 \times 9$ Navigation benchmark instance by (a) XSpeed after 13 BFS-levels (105563 post operations) and (b) SpaceEx's LGG algorithm (105563 post iterations) \label{fig:reachRegionNav09}
}
\end{figure}

In order to relate our results with SpaceEx (LGG) scenario, we apply the same number of post operations in both XSpeed and SpaceEx, with similar parameters. We count the number of post operations for a given bound on the BFS level in XSpeed and the exploration with SpaceEx is bounded with the same count on post operations (by setting the argument iter-max). However, note that the cost of post operations could be different for the symbolic states in XSpeed and in SpaceEx. Therefore, comparison on the number of post operations is not perfectly fair but we could not find a better means of comparing. Figure \ref{fig:reachRegionNav09} shows that the reachable region obtained from XSpeed on a Navigation benchmark after 13 BFS levels (105563 posts) is comparable to that obtained from SpaceEx (LGG) after 105563 posts, and XSpeed computes the reachable region $900\times$ faster, as shown in the Table.

\begin{figure}[!htb] 
   \includegraphics[width=\textwidth]{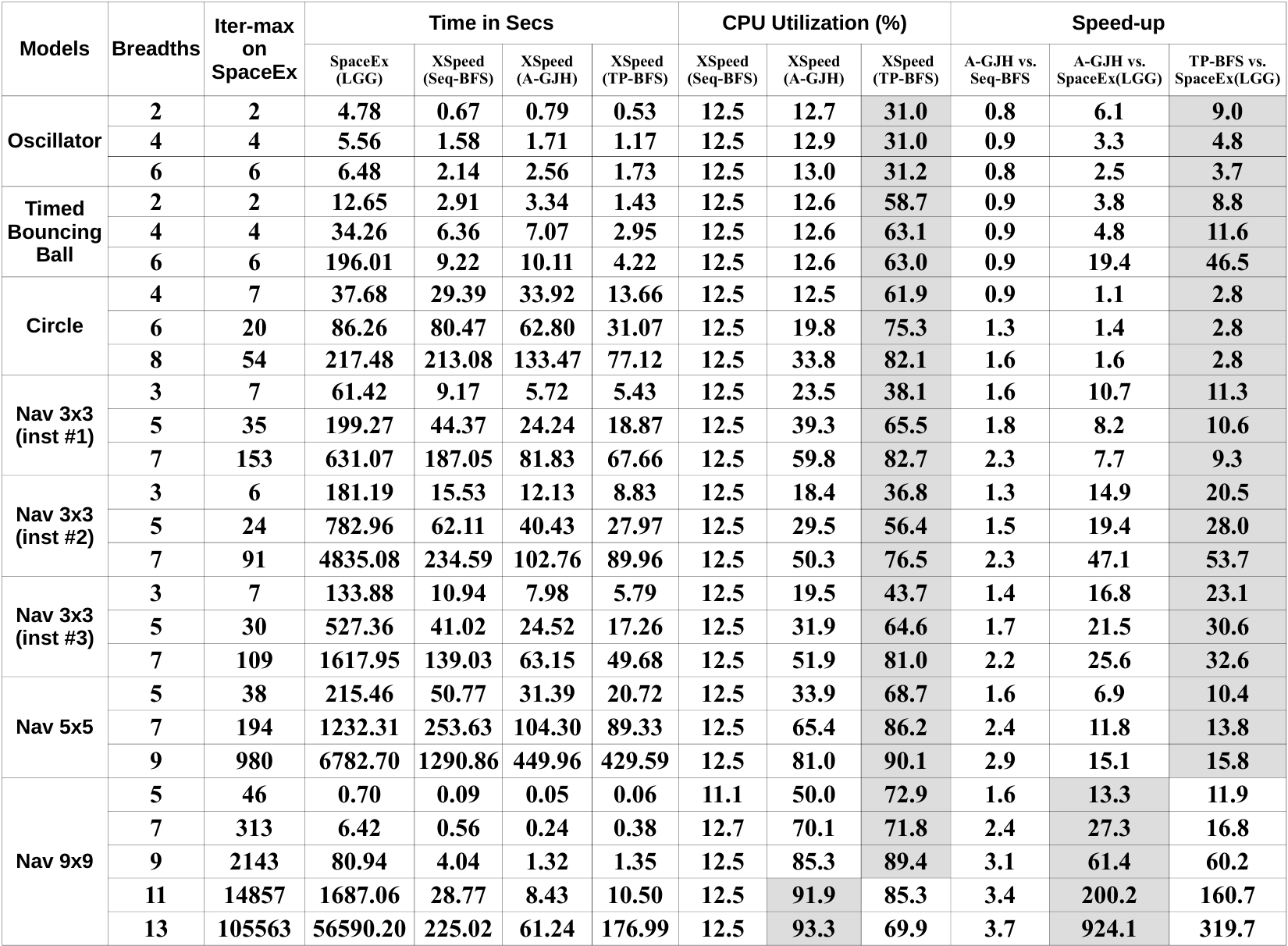}
\caption{Performance comparison of SpaceEx (LGG), sequential BFS, A-GJH and TP-BFS on hybrid systems benchmarks.}
\label{fig:experiments}
\end{figure}

The results for a Circle model is a good illustration of the effectiveness of the TP-BFS algorithm. BFS generates only two new symbolic states at every breadth, one of which exits the location invariants early leaving only one expensive flowpipe to be computed at each level. The A-GJH algorithm is slower than Seq-BFS algorithm due to the parallelization overhead. However, in case of TP-BFS algorithm the flowpipe tasks are distributed across all the available cores, making it faster than the Seq-BFS, A-GJH and SpaceEx (LGG).

We observe that when the number of explored symbolic states (shown as iter-max in Fig. \ref{fig:experiments}) is low to moderate, TP-BFS shows better performance and CPU core utilization in comparison to A-GJH and SpaceEx (LGG). A maximum of $47.1\times$ and  $53.7\times$ is observed on a $3 \times 3$ Navigation benchmark when $7$ BFS levels are explored with a total of $91$ symbolic states using A-GJH and TP-BFS respectively, with respect to SpaceEx's LGG scenario. We observe that when there is a large number of symbolic states in the waiting list, as in the NAV $9 \times 9$ instance, the CPU core utilization and performance of A-GJH is better than TP-BFS. We believe that this is because A-GJH keeps all available cores occupied, even if flowpipe computations are randomly assigned to cores, without taking their cost into consideration. In such a case, the extra overhead with task based load division becomes unnecessary as well as too expensive. This reduces CPU core utilization (since flowpipe cost computation and load-division is a sequential routine) and performance. This is illustrated in Fig. \ref{fig:CPUutilization} which shows that the overhead of load balancing degrades the performance in TP-BFS with the increase in the explored symbolic states, whereas the A-GJH algorithm consistently gains in performance and utilization. We verified that for the considered NAV $9 \times 9$ instance, the waiting list size in the BFS iterations are much larger than the available cores of the processor.

\begin{figure}[!htb] 
\centering{}
   \includegraphics[scale=0.35]{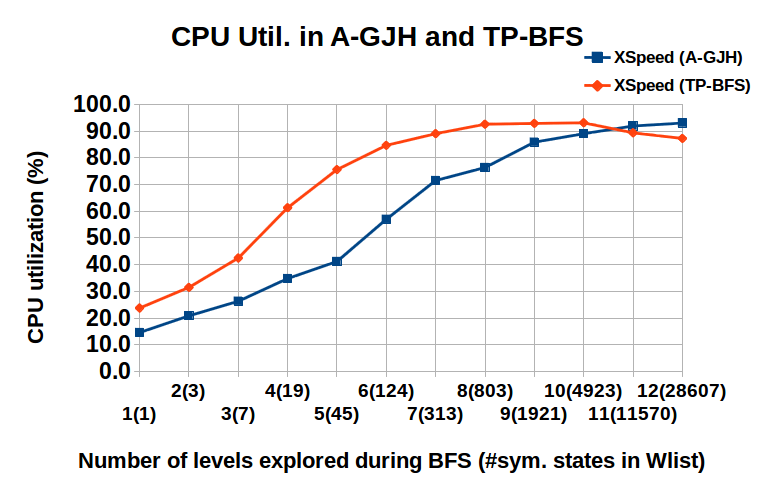}
\caption{Comparison of CPU Utilization between A-GJH and TP-BFS algorithm on a 9 $\times$ 9 Navigation model with sampling time as $1e-3$.}
\label{fig:CPUutilization}
\end{figure}



\section{Conclusion}\label{sec:conclusion}
We present an adaption of G.J. Holzmann's breadth first exploration algorithm of the SPIN model checker, for reachability analysis of hybrid systems. We show that due to the intricacies of post operators in hybrid systems, this first approach does not always produce an efficient load balancing in the hybrid systems scenario. We then propose an alternative approach for load balancing that splits the tasks and distributes them evenly according to an efficiently precomputed cost of the post operations. We provide an implementation of this approach using a support-function based algorithm. Our experiments show that this second approach shows in general a better load-balancing and performance with respect to the first one, with the exception when the number of symbolic states to be explored in the next step is considerably very large. Overall, the two proposed algorithms show a considerable improvement in performance with respect to the current state of the art in reachability analysis for hybrid systems.

\bibliography{mybibliography}

\begin{thebibliography}{10}
\providecommand{\url}[1]{\texttt{#1}}
\providecommand{\urlprefix}{URL }

\bibitem{DBLP:conf/hybrid/BakBJ15}
Bak, S., Bogomolov, S., Johnson, T.T.: {HYST:} a source transformation and
  translation tool for hybrid automaton models. In: Proc. of HSCC'15. pp.
  128--133. ACM (2015)

\bibitem{Bartocci2009}
Bartocci, E., Corradini, F., Berardini, M.R.D., Entcheva, E., Smolka, S.A.,
  Grosu, R.: Modeling and simulation of cardiac tissue using hybrid {I/O}
  automata. Theor. Comput. Sci.  410(33-34),  3149--3165 (2009)

\bibitem{Bartocci2016}
Bartocci, E., Li\'o, P.: Computational modeling, formal analysis, and tools for
  systems biology. PLoS Comput Biol  12(1),  1--22 (2016)

\bibitem{Bogomolov2015b}
Bogomolov, S., Donz{\'e}, A., Frehse, G., Grosu, R., Johnson, T.T., Ladan, H.,
  Podelski, A., Wehrle, M.: Guided search for hybrid systems based on
  coarse-grained space abstractions. International Journal on STTT pp. 1--19
  (2015)

\bibitem{DBLP:conf/hvc/BogomolovFGGPPS14}
Bogomolov, S., Frehse, G., Greitschus, M., Grosu, R., Pasareanu, C.S.,
  Podelski, A., Strump, T.: Assume-guarantee abstraction refinement meets
  hybrid systems. In: Proc. of {HVC}. pp. 116--131. LNCS, Springer (2014)

\bibitem{Bogomolov2015a}
Bogomolov, S., Schilling, C., Bartocci, E., Batt, G., Kong, H., Grosu, R.:
  Abstraction-based parameter synthesis for multiaffine systems. In: Proc. of
  HVC. LNCS, vol. 9434, pp. 19--35 (2015)

\bibitem{DBLP:conf/cav/ChenAS13}
Chen, X., {\'{A}}brah{\'{a}}m, E., Sankaranarayanan, S.: Flow*: An analyzer for
  non-linear hybrid systems. In: Proc. of CAV 2013. LNCS, vol. 8044, pp.
  258--263 (2013)

\bibitem{DBLP:conf/hybrid/ChutinanK99}
Chutinan, A., Krogh, B.H.: Verification of polyhedral-invariant hybrid automata
  using polygonal flow pipe approximations. In: Proc. of HSCC'99. LNCS, vol.
  1569, pp. 76--90. Springer (1999)

\bibitem{FehnkerI04}
Fehnker, A., Ivancic, F.: Benchmarks for hybrid systems verification. In: Proc.
  of HSCC. LNCS, vol. 2993, pp. 326--341. Sprnger (2004)

\bibitem{Frehse08}
Frehse, G.: {PHAVer}: Algorithmic verification of hybrid systems past {HyTech}.
  STTT  10(3),  263--279 (2008)

\bibitem{FLGDCRLRGDM11}
Frehse, G., Le~Guernic, C., Donz\'e, A., Cotton, S., Ray, R., Lebeltel, O.,
  Ripado, R., Girard, A., Dang, T., Maler, O.: Space{E}x: Scalable verification
  of hybrid systems. In: Proc. of CAV. LNCS, vol. 6806, pp. 379--395. Springer
  (2011)

\bibitem{DBLP:conf/hybrid/Girard05}
Girard, A.: Reachability of uncertain linear systems using zonotopes. In: Proc.
  of HSCC 2015. LNCS, vol. 3414, pp. 291--305. Springer (2005)

\bibitem{GirardLG08}
Girard, A., {Le Guernic}, C.: Efficient reachability analysis for linear
  systems using support functions. Proc. of IFAC World Congress  41(2),
  8966--8971 (2008)

\bibitem{DBLP:conf/cav/GuernicG09}
Guernic, C.L., Girard, A.: Reachability analysis of hybrid systems using
  support functions. In: Proc. of CAV 2009. LNCS, vol. 5643, pp. 540--554.
  Springer (2009)

\bibitem{Henzinger95}
Henzinger, T.A., Kopke, P.W., Puri, A., Varaiya, P.: What's decidable about
  hybrid automata? In: Journal of Computer and System Sciences. pp. 373--382.
  ACM Press (1995)

\bibitem{parSpinHolzmann12}
Holzmann, G.J.: Parallelizing the {SPIN} model checker. In: Proc. of {SPIN}
  2012. LNCS, vol. 7385, pp. 155--171. Springer (2012)

\bibitem{KongGCC15}
Kong, S., Gao, S., Chen, W., Clarke, E.M.: d{R}each: {\(\delta\)}-reachability
  analysis for hybrid systems. In: Proc. of TACAS 15. Lecture Notes in Computer
  Science, vol. 9035, pp. 200--205. Springer (2015)

\bibitem{LeGuernic09}
Le~Guernic, C.: Reachability analysis of hybrid systems with linear continuous
  dynamics. Ph.D. thesis, Universit{\'e} Grenoble 1 - Joseph Fourier (2009)

\bibitem{Lee2015}
Lee, E.A., Seshia, S.A.: Introduction to Embedded Systems - A Cyber-Physical
  Systems Approach. E. A. Lee and S. A. Seshia, second edition edn. (2015)

\bibitem{DBLP:conf/cade/PlatzerQ08}
Platzer, A., Quesel, J.: Keymaera: {A} hybrid theorem prover for hybrid systems
  (system description). In: Proc. of {IJCAR}. LNCS, vol. 5195, pp. 171--178.
  Springer (2008)

\bibitem{DBLP:conf/hybrid/RayG15}
Ray, R., Gurung, A.: Poster: Parallel state space exploration of linear systems
  with inputs using xspeed. In: Proc. of HSCC'15. pp. 285--286. ACM (2015)

\bibitem{DBLP:conf/hvc/RayGDBBG15}
Ray, R., Gurung, A., Das, B., Bartocci, E., Bogomolov, S., Grosu, R.: X{S}peed:
  Accelerating reachability analysis on multi-core processors. In: Proc. of HVC
  2015. LNCS, vol. 9434, pp. 3--18 (2015)

\bibitem{RW98}
Rockafellar, R.T., Wets, R.J.B.: Variational Analysis, vol. 317. Springer
  (1998)

\end{thebibliography}

\end{document}